\newtheorem{lemma}{Lemma}
\newtheorem{remark}{Remark}
\newtheorem{example}{Example}
\newcommand{\figwidth}{.85\textwidth}
\newcommand{\figComment}{For a more easy read plot, in color, the reader is referred to the digital version of this paper.}
\date{\today}
\begin{document}


  \title{Model Reduction using a Frequency-Limited $\Htwo$-Cost}

\author{Daniel~Petersson$^ a$\thanks{ Corresponding
    author. Email: petersson@isy.liu.se \vspace{6pt}} ~ and
  Johan~L\"ofberg$^ a$\\
\vspace{6pt} \\
$^ a$ Division of Automatic Control, \\
Department of Electrical Engineering,\\
Link\"opings universitet, SE-581 83 Sweden }

\maketitle
  \begin{abstract}
    We propose a method for model reduction on a given frequency range, without the use of input and output filter weights. The method uses a nonlinear optimization approach to minimize a frequency limited $\Htwo$ like cost function.
    An important contribution in the paper is the derivation of the gradient of the proposed cost function. The fact that we have a closed form expression for the gradient and that considerations have been taken to make the gradient computationally efficient to compute enables us to efficiently use off-the-shelf optimization software to solve the optimization problem.
  \end{abstract}


\section{Introduction}

Given a linear time-invariant (\lti) dynamical model,
\begin{align*}
  \dot x(t) &= \m Ax(t) + \m Bu(t), \\
  y(t) &= \m Cx(t) + \m Du(t)
\end{align*}
where $\m A\in\reals^{n\times n}$, $\m B\in\reals^{n\times m}$, $\m C \in \reals^{p\times n}$ and $\m D\in \reals^{p\times m}$, the model reduction problem is to find a reduced order model
\begin{align*}
  \dot x_r(t) &= \m A_rx_r(t) + \m B_ru(t), \\
  y_r(t) &= \m C_rx_r(t) + \m D_ru(t),
\end{align*}
with $\m A_r\in\reals^{n_r\times n_r}$, $\m B_r\in\reals^{n_r\times m}$, $\m C_r \in \reals^{p\times n_r}$ and $\m D_r\in \reals^{p\times m}$ with $n_r < n$, where this reduced order model describes the original model well in some metric. In this paper we are interested in a reduced order model that describes the model well on a given frequency range. This is motivated by situations where the given model is valid only for a certain frequency range, for example as in \cite{VargaHP:2012} where models coming from aerodynamical and structural mechanics computations describing a flexible structure are only valid up to a certain frequency. 

For a review of model reduction approaches, both ordinary and frequency-weighted, see \eg \cite{GugercinA:2004} and \cite{GhafoorS:2008}. Some of the most commonly used frequency-weighted methods, according to \cite{GugercinA:2004}, are \cite{WangSL:1999}, \cite{Enns:1984} and \cite{LinC:1992}, which all use different balanced truncation approaches. In many of the frequency-weighted methods one has to specify input and output filter weights. In \cite{GawronskiJ:1990} they introduce a method which does not need these weighting functions, by introducing frequency-limited Gramians. This method can be interpreted as using ideal low-, band- or high-pass filters as weights. However, this method has the drawback of not always producing stable models. One approach to remedy this has been presented in \cite{GugercinA:2004}, where they introduce a modification of the method in \cite{GawronskiJ:1990}, and additionally derive an $\Hinf$ bound for the error. \cite{SahlanGS:2012} also presents a modification of the method from \cite{GawronskiJ:1990}, however this method is only applicable to \siso models. Note that these methods, whilst having an upper bound on the $\Hinf$ error, do not minimize an explicit measure.

One of the main contributions in this paper is a method which uses optimization, and not truncation, to find an $\Htwo$-optimal reduced order model, and, where this optimization is only performed over a limited frequency interval. Another important contribution is the derivation of the gradient for the cost function. The fact that we have a closed form expression for the gradient and that considerations have been taken to make the gradient computationally efficient to compute enables us to efficiently use off-the-shelf optimization software to solve the optimization problem.


\section{Frequency Limited Gramians}
\label{sec:freq-limit-gram}

The method proposed in this paper uses the idea presented in \cite{GawronskiJ:1990}, in which they introduce frequency-limited Gramians.  Before introducing the frequency-limited Gramians we define the standard Gramians, see \cite{ZhouDG:1996}, in time and frequency domain, for reference.
For a system $G$, that is stable and described by
\begin{align}
  \label{eq:7}
  G:\left\{
    \begin{array}{rcl}
      \dot x(t) & = & \m A x(t) + \m B u(t) \\
      y(t) & = & \m C x(t) + \m D u(t)
    \end{array}
  \right.
\end{align}
denoted as $G : \systemdesc{\m A}{\m B}{\m C}{\m D}$, the observability and controllability Gramians are defined as
\begin{subequations}\label{eq:GramDef}
  \begin{align}
    \m P & = \int_0^\infty \e{\m A\tau}\m B\m B^\+ \e{\m A^\+\tau}\der\tau = \frac{1}{2\pi}\int_{-\infty}^\infty \m H(\nu)\m B\m B^\+\m H^\herm(\nu) \der\nu, \\
    \m Q & = \int_0^\infty \e{\m A^\+\tau}\m C^\+\m C \e{\m A\tau}\der\tau = \frac{1}{2\pi}\int_{-\infty}^\infty \m H^\herm(\nu)\m C^\+\m C\m H(\nu) \der\nu,
  \end{align}
\end{subequations}
where $\m H(\omega) = (\ii\omega\ident - \m A)^\inv$ and $\m H^\herm(\omega)$ denotes the conjugate transpose of $\m H(\omega)$. The controllability and observability Gramians satisfy, respectively, the Lyapunov equations
\begin{subequations}\label{eq:lyap}
\begin{align}
  \m A\m P + \m P\m A^\+ + \m B\m B^\+ & = \m 0, \\
  \m A^\+\m Q + \m Q\m A + \m C^\+\m C & = \m 0.
\end{align}
\end{subequations}

Now we narrow the frequency band, from $(-\infty,\infty)$ to $(-\omega,\omega)$ where $\omega < \infty$. We define the frequency-limited Gramians, see \cite{GawronskiJ:1990}, as
\begin{subequations}\label{eq:LimGramDef}
  \begin{align}
    \m P_{\omega} & = \frac{1}{2\pi}\int_{-\omega}^\omega \m H(\nu)\m B\m B^\+\m H^\herm(\nu) \der\nu, \\
    \m Q_{\omega} & =\frac{1}{2\pi}\int_{-\omega}^\omega \m H^\herm(\nu)\m C^\+\m C\m H(\nu) \der\nu.
  \end{align}
\end{subequations}
These Gramians can be shown to satisfy the following Lyapunov equations, see \cite{GawronskiJ:1990},
\begin{subequations}
\begin{align}
  \m A\m P_{\omega} + \m P_{\omega}\m A^\+ + \m S_{\omega}\m B\m B^\+ + \m B\m B^\+\m S^\herm_{\omega} & = \m 0, \\
  \m A^\+\m Q_{\omega} + \m Q_{\omega}\m A + \m S^\herm_{\omega}\m C^\+\m C + \m C^\+\m C\m S_{\omega} & = \m 0,
\end{align}
\end{subequations}
with 
\begin{equation}
  \m S_{\omega} = \frac{\ii}{2\pi}\ln\parens{(\m A + \ii\omega\ident)(\m A - \ii\omega\ident)^\inv}
\end{equation}

In \cite{GawronskiJ:1990} they continue by creating a balanced system and performing a balanced truncation using the newly defined frequency-limited Gramians. A drawback with this method is that since the terms $\m S_{\omega}\m B\m B^\+ + \m B\m B^\+\m S^\herm_{\omega}$ and $\m S^\herm_{\omega}\m C^\+\m C + \m C^\+\m C\m S_{\omega}$ are not guaranteed to be positive definite, stability of the reduced order model cannot be guaranteed. There exist a modification to this in \cite{GugercinA:2004} where they propose a remedy to this.

\begin{remark}
  By using addition/subtraction of two or more different
  frequency-limited Gramians it is possible to focus on one or more
  arbitrary frequency ranges, \eg, you can construct the
  frequency-limited controllability Gramian, $\m P_\Omega$, for the
  interval
  $\omega\in\Omega=[\omega_1,\omega_2]\cup[\omega_3,\omega_4]$ as
  \begin{equation}
    \label{eq:15}
    \m A\m P_{\Omega} + \m P_{\Omega}\m A^\+ + \m S_{\Omega}\m B\m B^\+ + \m B\m B^\+\m S^\herm_{\Omega} = \m 0,
  \end{equation}
  with $\m S_{\Omega} = \m S_{\omega_2} - \m S_{\omega_1} + \m
  S_{\omega_4} - \m S_{\omega_3}$.
\end{remark}

\section{Frequency Limited Model Reduction using Optimization}
\label{sec:freq-limit-htwo}

The $\Htwo$-norm of $G$, in \eqref{eq:7}, can be expressed as
\begin{subequations}
\begin{align}
  \label{eq:1}
  \norm{G}^2_{\Htwo} = & \trace \int_0^\infty \m C\e{\m A\tau}\m B\m B^\+ \e{\m A^\+\tau}\m C^\+ \der\tau \\
= & \frac{1}{2\pi}\trace \int_{-\infty}^\infty G(\ii\nu)G^\herm(\ii\nu) \der\nu \\
  = & \frac{1}{2\pi}\trace\int_{-\infty}^\infty \m C\m H(\nu)\m B\m B^\+\m H^\herm(\nu)\m C^\+ \der\nu = \trace \m C \m P \m C^\+ \\
  = & \trace \int_0^\infty \m B^\+ \e{\m A^\+\tau}\m C^\+\m C \e{\m A\tau}\m B \der\tau \\
  = &\frac{1}{2\pi}\trace\int_{-\infty}^\infty \m B^\+\m H^\herm(\nu)\m C^\+\m C\m H(\nu) \m B\der\nu = \trace \m B^\+ \m Q \m B.
\end{align}
\end{subequations}
In this paper we introduce a new frequency-limited $\Htwo$-like norm that uses the frequency-limited Gramians presented in the previous section, and we denote the new measure by $\norm{G}_{\Htwo,\omega}$, with
\begin{align}
  \label{eq:8}
  \norm{G}_{\Htwo,\omega}^2 = & \frac{1}{2\pi}\trace \int_{-\omega}^\omega G(\ii\nu)G^\herm(\ii\nu) \der\nu \\
= & \frac{1}{2\pi}\trace\int_{-\omega}^\omega \parens{\m C\m H(\nu)\m B + \m D}\parens{\m B^\+\m H^\herm(\nu)\m C^\+ + \m D^\+} \der\nu \\
  = & \trace \m C \m P_\omega \m C^\+ + 2\trace\brackets{\parens{\m C \m S_\omega\m B + \m D\frac{\omega}{2\pi}}\m D^\+}.
\end{align}
One thing that differs from the ordinary $\Htwo$-norm is that, if we do not include an infinite interval in $\Omega$, \ie, include $\omega = \infty$ as the end frequency, then the system does not need to be strictly proper. This means that we can, in this case, have $\m D \neq \m 0$.

The method proposed in this paper is a model reduction method that, given a model $G$, finds a reduced order model, $\hat G$, that is a good approximation on a given frequency interval, \eg $[0,\omega]$. The objective is to minimize the error between the given model and the sought reduced order model in a frequency-limited $\Htwo$-norm, using the frequency-limited Gramians. We formulate the optimization problem
\begin{align}
  \label{eq:3}
  \minimize{\hat G} \norm{G-\hat G}^2_{\Htwo,\omega} = \minimize{\hat G} \norm{E}^2_{\Htwo,\omega},
\end{align}
where
\begin{align}
  \label{eq:12}
  \norm{E}^2_{\Htwo,\omega} = \frac{1}{2\pi}\trace\int_{-\omega}^\omega E(\ii\nu) E^\herm (\ii\nu) \der\nu.
\end{align}
Assume that the system $E$ is stable and described by
\begin{align}
  E : \systemdesc{\m A_E}{\m B_E}{\m C_E}{\m D_E}.
\end{align}
Given $G$ and $\hat G$, represented as
\begin{equation}
  \label{eq:5}
   G : \systemdesc{\m A}{\m B}{\m C}{\m D}, 
   \hat G : \systemdesc{\mh A}{\mh B}{\mh C}{\mh D},
\end{equation}
the error system can be realized, in state space form, as
\begin{align}\label{eq:partitioningE}
  E & : \systemdesc{\m A_{E}}{\m B_{E}}{\m C_{E}}{\m D_{E}} = \systemdesc%
  {
    \begin{pmatrix}
      \m A & \m 0 \\
      \m 0 & \mh A
    \end{pmatrix}
  }
  {
    \begin{pmatrix}
      \m B \\ \mh B
    \end{pmatrix}
  }
  {
    \begin{pmatrix}
      \m C & -\mh C
    \end{pmatrix}
  }
  {\m D - \mh D}.
\end{align}
This realization of the error system will later prove beneficial when rewriting the optimization problem. Throughout the paper we will assume that the given model is stable.

The cost function for the optimization problem \eqref{eq:3} can be written as
\begin{subequations}
  \label{eq:2}
  \begin{align}
    \norm{E}^2_{\Htwo,\omega} = & \trace \m C_E \m P_{E,\omega} \m C_E^\+ + 2\trace\brackets{\parens{\m C_E \m S_{E,\omega}\m B_E + \m D_E\frac{\omega}{2\pi}}\m D_E^\+} \\
    = &  \trace \m B_E^\+ \m Q_{E,\omega} \m B_E + 2\trace\brackets{\parens{\m C_E \m S_{E,\omega}\m B_E + \m D_E\frac{\omega}{2\pi}}\m D_E^\+}.
  \end{align}
\end{subequations}
where
\begin{subequations}\label{eq:lyapLim}
\begin{align}
  \m A_E\m P_{E,\omega} + \m P_{E,\omega}\m A_E^\+ + \m S_{E,\omega}\m B_E\m B_E^\+ + \m B_E\m B_E^\+\m S^\herm_{E,\omega} & = \m 0, \\
  \m A_E^\+\m Q_{E,\omega} + \m Q_{E,\omega}\m A_E + \m S^\herm_{E,\omega}\m C_E^\+\m C_E + \m C_E^\+\m C_E\m S_{E,\omega} & = \m 0,
\end{align}
\end{subequations}
with 
\begin{equation}
  \m S_{E,\omega} = \frac{\ii}{2\pi}\ln\parens{(\m A_E + \ii\omega\ident)(\m A_E - \ii\omega\ident)^\inv}.
\end{equation}
In this paper we have also derived a simpler expression for the matrix $\m S_{E,\omega}$, compared to what is presented in \cite{GawronskiJ:1990}.
\begin{lemma}\label{lem:realPart}
  For a matrix $\m A$ that is Hurwitz we have that
  \begin{equation}
    \label{eq:16}
    \m S_{\omega} = \frac{\ii}{2\pi}\ln\parens{(\m A + \ii\omega\ident)(\m A - \ii\omega\ident)^\inv} = 
    \real\brackets{\frac{\ii}{\pi}\ln\parens{-\m A-\ii\omega\ident}}
  \end{equation}
\end{lemma}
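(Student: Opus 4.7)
The plan is to reduce both sides to a common expression in terms of the single matrix $\m M := -\m A - \ii\omega\ident$. Since $\m A$ is Hurwitz, every eigenvalue of $-\m A$ has strictly positive real part, and hence so does every eigenvalue of $\m M$. In particular, no eigenvalue of $\m M$ lies on the branch cut of the principal logarithm, so $\ln \m M$ is unambiguously defined via the holomorphic functional calculus.

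First I would rewrite the argument of the logarithm on the left-hand side. Observing that $\m A + \ii\omega\ident = -\m M$ and $\m A - \ii\omega\ident = -\overline{\m M}$, where $\overline{\m M} = -\m A + \ii\omega\ident$ denotes entrywise complex conjugation (legitimate because the paper implicitly assumes $\m A$ real), the two minus signs cancel and the LHS becomes $\tfrac{\ii}{2\pi}\ln\bigl(\m M \overline{\m M}^{-1}\bigr)$.

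Next I would split the logarithm as $\ln(\m M\overline{\m M}^{-1}) = \ln \m M - \ln \overline{\m M}$. This requires justification: the matrices $\m M$ and $\overline{\m M}$ are both rational functions of $\m A$ and therefore commute, which is necessary but not sufficient. The genuine content is a branch-cut check, most easily done on eigenvalues. Each eigenvalue $\mu$ of $\m M$ satisfies $\real \mu > 0$, so $\arg \mu \in (-\pi/2,\pi/2)$, and the corresponding eigenvalue of $\m M\overline{\m M}^{-1}$ is $\mu/\overline{\mu} = e^{2\ii\arg \mu}$ with $2\arg \mu \in (-\pi,\pi)$; the principal logarithm of this quantity is $2\ii\arg\mu = \ln\mu - \ln\overline{\mu}$. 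Lifting this eigenvalue identity back to the commuting family via the holomorphic functional calculus (or simultaneous Jordan form) gives the desired matrix identity.

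Finally, the same branch-cut considerations give $\ln \overline{\m M} = \overline{\ln \m M}$, so $\ln \m M - \ln \overline{\m M} = 2\ii\,\imag(\ln \m M)$, and multiplying by $\ii/(2\pi)$ yields $-\tfrac{1}{\pi}\imag(\ln \m M) = \real\!\bigl[\tfrac{\ii}{\pi}\ln(-\m A - \ii\omega\ident)\bigr]$, which is the RHS. The main obstacle is the second step: turning the scalar branch analysis into a rigorous matrix statement. I expect the cleanest route is to argue on a contour-integral (Dunford--Taylor) representation of $\ln \m M$, which makes the conjugation identity and the splitting of the logarithm transparent once the spectrum of $\m M$ is confined to the open right half-plane.
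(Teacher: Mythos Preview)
Your route is correct and, at its core, the same idea the paper uses: reduce the matrix identity to the scalar identity $\ln(\mu\nu^{-1})=\ln\mu-\ln\nu$ for the principal logarithm and check that the relevant arguments stay inside $(-\pi,\pi)$. The packaging differs, though. The paper diagonalizes $\m A=\m V\m D\m V^{-1}$, writes $f(\m A)=\sum_k v_k\tilde v_k^\top f(\lambda_k)$, and then treats real eigenvalues and complex-conjugate pairs of eigenvalues separately; the realness of $\m A$ enters through the pairing $(\lambda,v)\leftrightarrow(\bar\lambda,\bar v)$. Your argument replaces this explicit decomposition by the single observation $\overline{\m M}=-\m A+\ii\omega\ident$ (valid because $\m A$ is real) together with $\ln\overline{\m M}=\overline{\ln\m M}$, and handles non-diagonalizable $\m A$ for free via the functional calculus. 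That is a genuine simplification over the paper's proof, which as written only covers the diagonalizable case.

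One small slip to fix: on a common eigenvector of $\m A$ with eigenvalue $\lambda$, the eigenvalue of $\overline{\m M}=-\m A+\ii\omega\ident$ is $\nu=-\lambda+\ii\omega$, which equals $\overline{\mu}$ (with $\mu=-\lambda-\ii\omega$) only when $\lambda$ is real. So the ``corresponding eigenvalue of $\m M\overline{\m M}^{-1}$'' is $\mu/\nu$, not $\mu/\overline{\mu}$ in general. This does not harm your branch-cut argument, since $\real\nu=-\real\lambda>0$ just as for $\mu$, so both $\arg\mu$ and $\arg\nu$ lie in $(-\pi/2,\pi/2)$ and $\ln(\mu\nu^{-1})=\ln\mu-\ln\nu$ still holds; just replace $\overline{\mu}$ by $\nu$ in that step. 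The subsequent identity $\ln\overline{\m M}=\overline{\ln\m M}$ is a separate (and correct) statement at the matrix level, not an eigenvalue-by-eigenvalue pairing.
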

\begin{proof}
  See \ref{sec:proof-lemma-}
\end{proof}

Now we want to rewrite the cost function \eqref{eq:2} to a more computationally tractable form. This is done by using the realization given in \eqref{eq:partitioningE} and by partitioning the Gramians $\m P_{E,\omega}$ and $\m Q_{E,\omega}$ as 
\begin{equation}\label{eq:grampart}
  \m P_{E,\omega} =
  \begin{pmatrix}
    \m P_\omega & \m X_\omega \\
    \m X_\omega^\+ & \mh P_\omega
  \end{pmatrix}, \quad
  \m Q_{E,\omega} =
  \begin{pmatrix}
    \m Q_\omega & \m Y_\omega \\
    \m Y_\omega^\+ & \mh Q_\omega
  \end{pmatrix},
\end{equation}
and $\m S_{E,\omega}$ as
\begin{align}
  \label{eq:10}
  \m S_{E,\omega} = 
  \begin{pmatrix}
      \m S_\omega & \m 0 \\
      \m 0 & \mh S_\omega
    \end{pmatrix}.
\end{align}
$\m P_\omega, \m Q_\omega,\mh P_\omega, \mh Q_\omega,\m X_\omega$ and $\m Y_\omega$ satisfy, due to (\ref{eq:lyapLim}), the Sylvester and Lyapunov equations
\begin{subequations}\label{eq:lyapsyl}
\begin{align}
  \m A\m P_\omega + \m P_\omega\m A^\+ + \m S_\omega\m B\m B^\+ + \m B\m B^\+\m S^\herm_\omega & = \m 0, \label{eq:lyapsyl1}\\
  \m A\m X_\omega + \m X_\omega\mh A^\+ + \m S_\omega\m B\mh B^\+ + \m B\mh B^\+\mh S^\herm_\omega & = \m 0, \label{eq:lyapsyl2}\\
  \mh A\mh P_\omega + \mh P_\omega\mh A^\+ + \mh S_\omega\mh B\mh B^\+ + \mh B\mh B^\+\mh S^\herm_\omega & = \m 0, \label{eq:lyapsyl3}\\
  \m A^\+\m Q_\omega + \m Q_\omega\m A + \m S^\herm_\omega\m C^\+\m C + \m C^\+\m C\m S_\omega & = \m 0, \label{eq:lyapsyl4}\\
  \m A^\+\m Y_\omega + \m Y_\omega\mh A - \m S^\herm_\omega\m C^\+\mh C - \m C^\+\mh C\mh S_\omega & = \m 0, \label{eq:lyapsyl5}\\
  \mh A^\+\mh Q_\omega + \mh Q_\omega\mh A + \mh S^\herm_\omega\mh C^\+\mh C + \mh C^\+\mh C\mh S_\omega & = \m 0, \label{eq:lyapsyl6}
\end{align}
\end{subequations}
with
\begin{equation}
  \m S_{\omega} = \real\brackets{\frac{\ii}{2\pi}\ln\parens{-\m A -\ii\omega\ident}}, \quad \mh S_{\omega} =  \real\brackets{\frac{\ii}{2\pi}\ln\parens{-\mh A -\ii\omega\ident}}.
\end{equation}

Note that $\m P_\omega$ and $\m Q_\omega$ satisfy the Lyapunov equations for the frequency-limited controllability and observability Gramians for the given model, and $\mh P_\omega$ and $\mh Q_\omega$ satisfy the Lyapunov equations for the frequency-limited controllability and observability Gramians for the sought model.

With the partitioning of $\m P_{E,\omega}$ and $\m Q_{E,\omega}$ it is possible to rewrite \eqref{eq:2} in two alternative forms
\begin{subequations}\label{eq:costfcns}
  \begin{align}
    \norm{E}^2_{\Htwo,\omega} = & \trace \left( \m B^\+\m Q_\omega\m B + 2\m B^\+\m Y_\omega\mh B + \mh B^\+\mh Q_\omega\mh B\right) \nonumber \\
    & + 2\trace\brackets{\m C\m S_\omega\m B + \m D\frac{\omega}{2\pi} -  \parens{\mh C\mh S_\omega\mh B + \mh D\frac{\omega}{2\pi}}}\parens{\m D^\+ - \mh D^\+} \label{eq:costfcnB}, \\
    \norm{E}^2_{\Htwo,\omega} = & \trace \left( \m C\m P_\omega\m C^\+ - 2\m C\m X_\omega\mh C^\+ + \mh C\mh P_\omega\mh C^\+\right) \nonumber \\
    & + 2\trace\brackets{\m C\m S_\omega\m B + \m D\frac{\omega}{2\pi} -  \parens{\mh C\mh S_\omega\mh B + \mh D\frac{\omega}{2\pi}}}\parens{\m D^\+ - \mh D^\+}. \label{eq:costfcnC}
  \end{align}
\end{subequations}
\begin{remark}
  Note that neither the term $\m B^\+\m Q_\omega\m B$ nor the term $\m C\m P_\omega\m C^\+$, which are included in the cost function \eqref{eq:costfcns}, depend on the optimization variables, $\mh A,\mh B,\mh C$ and $\mh D$. Hence, these terms can be excluded from the optimization. These are the only terms including $\m P_\omega$ and $\m Q_\omega$ which are the most costly to compute.
\end{remark}

When optimizing the frequency-limited $\Htwo$-norm using the system matrices as optimization variables we have the freedom to choose which elements we optimize over, \ie, we can introduce structure in $\mh A,\mh B,\mh C$ and $\mh D$, as long as we can find an $\mh A$ that is Hurwitz. Let us introduce the matrices $\m S_{\mh A},\m S_{\mh B},\m S_{\mh C}$ and $\m S_{\mh D}$ which hold the structure of the sought matrices, \ie,
\begin{align}
  \label{eq:9}
  \brackets{\m S_{\mh A}}_{ij} = \left\{
    \begin{matrix}
      1, & \quad \text{if } \brackets{\mh A}_{ij} \text{ is a free variable;} \\
      0, & \quad \text{otherwise.} \\
    \end{matrix}
    \right.
\end{align}
We will see in the next section, that due to the element-wise differentiation, this structure will be inherited in the gradient.

The parametrization of the sought system using the full system matrices is of course redundant, which leads to a non-unique minimum of the cost function in the parameter space. This leads to a singular Hessian matrix. However, this is taken care of in most quasi-newton solvers to ensure that the minimum is reached in a numerically stable way.

\subsection{Gradient of the Cost Function}
\label{sec:grad-cost-funct}

An appealing feature of the proposed nonlinear optimization approach, using our proposed $\Htwo$-like measure to solve the problem, is that the equations (\ref{eq:costfcns}) are differentiable in the system matrices, $\mh A,\mh B,\mh C$ and $\mh D$. In addition, the closed form expression obtained when differentiating the cost function is expressed in the given data ($\m A,\m B,\m C$ and $\m D$), the optimization variables ($\mh A,\mh B,\mh C$ and $\mh D$) and solutions to the equations in \eqref{eq:lyapsyl}.

To show this we start by differentiating with respect to $\mh B,\mh C$ and $\mh D$. First we note that neither $\m Q_\omega, \m Y_\omega$ nor $\mh Q_\omega$ in equation \eqref{eq:costfcnB} depends on $\mh B$ which means that the equation is quadratic in $\mh B$. Analogous observations can be made with equation \eqref{eq:costfcnC} and the variable $\mh C$ and similarly with $\mh D$. Hence, the derivative of the cost function with respect $\mh B,\mh C$ and $\mh D$ becomes
\begin{subequations}
\begin{align}
  \frac{\partial\norm{E}^2_{\Htwo,\omega}}{\partial\mh B} & = 2\left( \mh Q_\omega\mh B + \m Y_\omega^\+\m B - \mh S_\omega^\+ \mh C^\+\parens{\m D - \mh D}\right)\odot \m S_{\mh B}, \\
  \frac{\partial\norm{E}^2_{\Htwo,\omega}}{\partial\mh C} & = 2\left( \mh C\mh P_\omega - \m C\m X_\omega - \parens{\m D - \mh D}\mh B^\+\mh S_\omega^\+\right) \odot \m S_{\mh C}, \\
  \frac{\partial\norm{E}^2_{\Htwo,\omega}}{\partial\mh D} & = -2\parens{\m C\m S_\omega\m B + \m D\frac{\omega}{2\pi} - \mh C\mh S_\omega\mh B- \mh D\frac{\omega}{2\pi} + \parens{\m D - \mh D}\frac{\omega}{2\pi}} \odot \m S_{\mh D},
\end{align}
\end{subequations}
where $\odot$ represents the Hadamard product of matrices, \ie, element-wise multiplication.

For the more complicated case of differentiating with respect to $\mh A$ we observe that $\mh Q_\omega$ and $\m Y_\omega$ do depend on $\mh A$, see the equations in \eqref{eq:lyapsyl}. The calculations of this part of the gradient are lengthy and can be found in \ref{sec:derGradA}.

The complete gradient becomes
\begin{subequations}
\begin{align}
  \frac{\partial\norm{E}^2_{\Htwo,\omega}}{\partial\mh A} = & 2\parens{\m Y_\omega^\+\m X + \mh Q_\omega\mh P}\odot \m S_{\mh A} - 2\m W \odot \m S_{\mh A}, \\
\frac{\partial\norm{E}^2_{\Htwo,\omega}}{\partial\mh B} = & 2\left( \mh Q_\omega\mh B + \m Y_\omega^\+\m B \mh S_\omega^\+ \mh C^\+\parens{\m D - \mh D}\right) \odot \m S_{\mh B}, \\
  \frac{\partial\norm{E}^2_{\Htwo,\omega}}{\partial\mh C} = & 2\left( \mh C\mh P_\omega - \m C\m X_\omega - \parens{\m D - \mh D}\mh B^\+\mh S_\omega^\+\right) \odot \m S_{\mh C}, \\
  \frac{\partial\norm{E}^2_{\Htwo,\omega}}{\partial\mh D} = & -2\parens{\m C\m S_\omega\m B + \m D\omega - \mh C\mh S_\omega\mh B- \mh D\omega + \parens{\m D - \mh D}\omega} \odot \m S_{\mh D},
\end{align}
\end{subequations}
where
\begin{subequations}
  \label{eq:11}
  \begin{align}
    \m W = & \real\left[\frac{\ii}{\pi} L\parens{-\mh A-\ii\omega\ident,\m V}\right]^\+, \\
    \m V = & \mh C^\+\mh C\mh P - \mh C^\+\m C\m X-\mh C^\+\parens{\m
      D - \mh D}\mh B^\+
  \end{align}
\end{subequations}
with the function $L(\cdot,\cdot)$ being the Frech\'et derivative of the matrix logarithm, see \cite{Higham:2008}.

\begin{remark}
  Remember that if $\omega = \infty$ is included as the end frequency
  we need to have $\m D_E = \m D - \mh D = \m 0$, \ie, $\mh D$ constant and with $\m D = \mh D$.
\end{remark}
\begin{remark}
  The proposed method can, analogous to what is done in \cite{Petersson:2010}, easily be extended to a method for identifying \lpv-models over a limited frequency domain.
\end{remark}
\begin{remark}
  By supplying the cost function and its gradient in computationally efficient forms this method can be used in any off-the-shelf Quasi-Newton solver.
\end{remark}
\begin{remark}
  By using a stable model, \eg, a model from a Hankel reduction, as an initial point in the optimization and using a line-search we can limit the search to stable models.
\end{remark}

\section{Numerical Examples}
\label{sec:numerical-examples}

In this section three examples are used to illustrate the applicability of the method and to compare it with other methods. In the examples we will use three different methods; Truncation of Hankel singular values (will be called Hankel), the method proposed in \cite{GawronskiJ:1990} (called Gawronski), the method proposed in \cite{GugercinA:2004} (called Mod. Gawronski) and the proposed method (called Prop. method). The proposed method and the original and modified Gawronski method will take the limited frequency range into account, but Hankel will not. We will also compare with the Hankel method where we use an input filter to help that method to focus on the frequency interval of interest. The Gawronski method is a representative method among frequency-weighted methods, see \cite{GugercinA:2004}, with the benefit of not having to design weighting functions, but with the drawback that it cannot guarantee that the resulting model is stable. 

The proposed method uses a cost function which is non-convex, which makes it important to use a good initial point. For the examples presented here we have used the model obtained by the Gawronski method as an initial point for the optimization in the proposed method. If the Gawronski method generates an unstable model we use the Hankel method instead.

To evaluate the different models against each other we will compare the error model, $G-\hat G$, for the given frequency interval using the limited frequency $\Htwo$-norm, $\norm{G-\hat G}_{\Htwo,\omega}$, and the relative limited frequency $\Htwo$-norm, $\frac{\norm{G-\hat G}_{\Htwo,\omega}} {\norm{G}_{\Htwo,\omega}}$. We will also compute the relative $\Hinf$-norm on the given frequency interval, denoted $\frac{\norm{G-\hat G}_{\Hinf,\omega}}{\norm{G}_{\Hinf,\omega}}$, and the eigenvalue with the largest real part.

\begin{example}[Small illustrative example]\label{ex:twoModeSys}
  This example addresses a small model with four states. The model is composed of two second order models in series, one with a resonance frequency at $\omega=1$ and the other at $\omega=3$. We will limit the frequency range to $\omega\in[0,1.7]$ to try to only capture the first model.
  \begin{align}
    \label{eq:4}
    G = G_1G_2 = \frac{1}{s^2+0.2s+1}\frac{9}{s^2+0.003s+9}.
  \end{align}
To help the Hankel method we create a low pass Butterworth filter of order 10 with a cut off frequency of 1.7, see Figure \ref{fig:twoModeSysFilter}.
The results from the different methods can be seen in Figures \ref{fig:twoModeSysFull} and \ref{fig:twoModeSysError} and Table \ref{tab:twoModeSys}. As can be seen in the result we are successful in finding a good model for the first model with both the proposed method and the Gawronski method. All the reduced order models are stable. The Hankel method and the modified Gawronski method captures the wrong resonance mode (from our perspective) and fails completely in the lower frequency region. The Hankel method on the filtered model finds a model with the same resonance frequency, but otherwise has a bad correspondence with the true model. The proposed method and the Gawronski method return models essentially indiscernible.
\begin{figure}
   \centering
  \includegraphics[width=\figwidth]{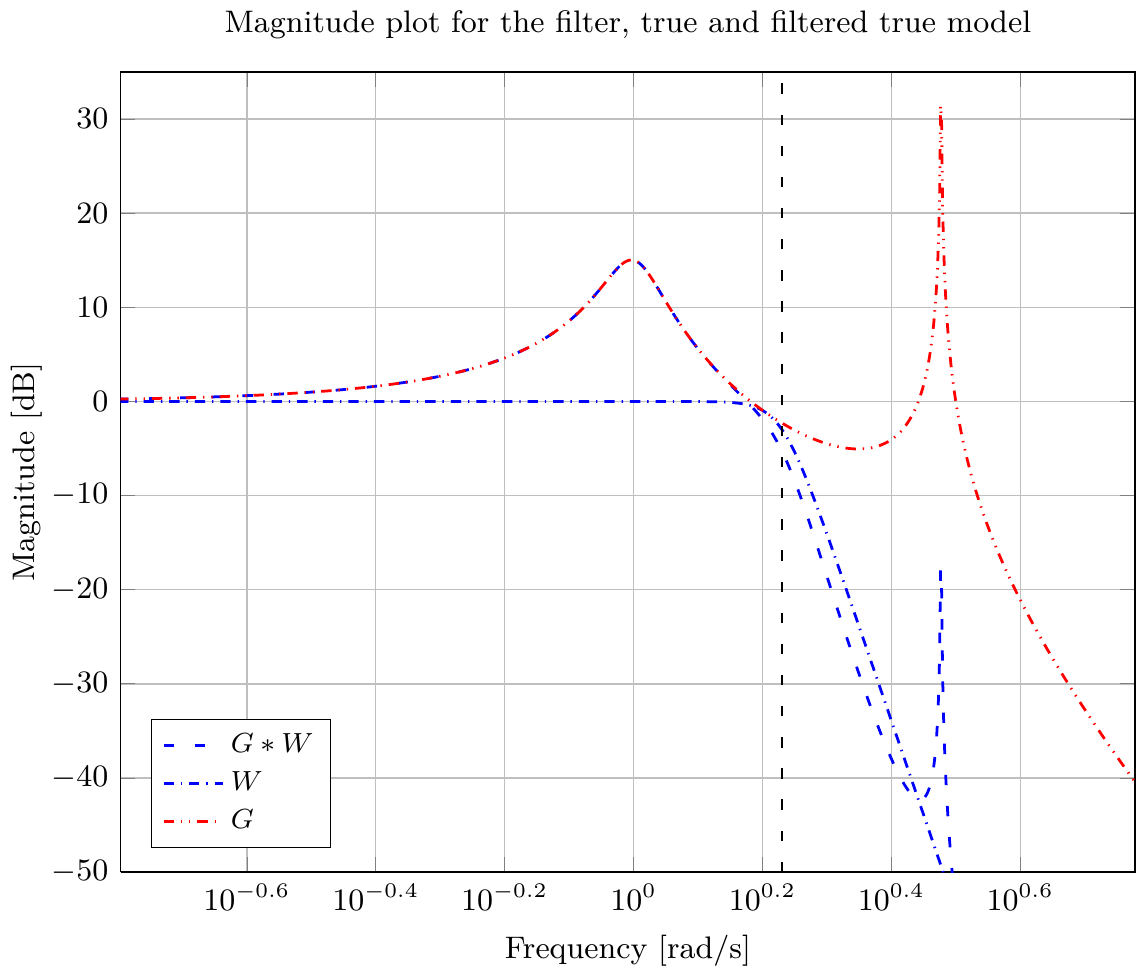}
  \caption{%
    Magnitude plot of the given model, the filtered model and the filter used in Example \ref{ex:twoModeSys}. \figComment}\label{fig:twoModeSysFilter}
\end{figure}
\begin{figure}
  \centering
  \includegraphics[width=\figwidth]{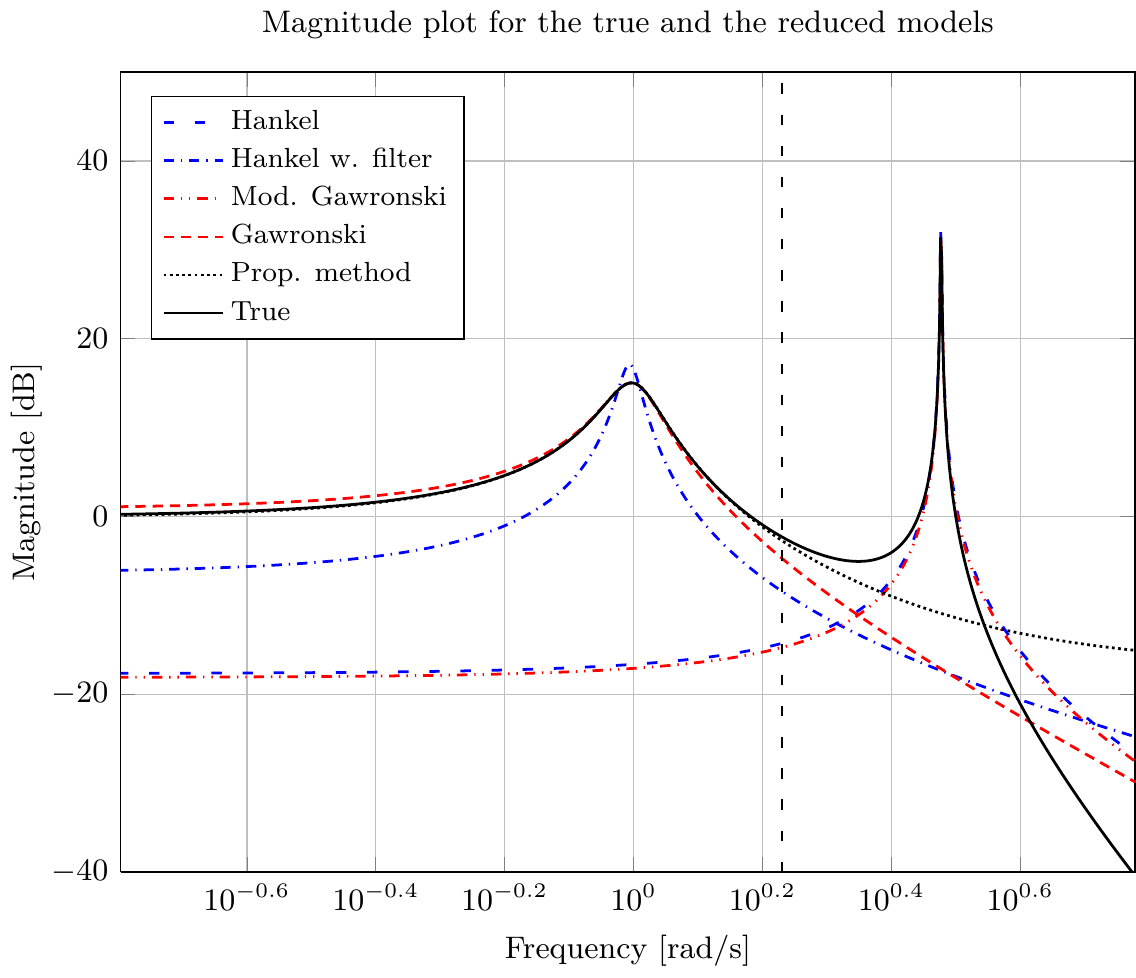}
  \caption{Magnitude plot of the given and reduced order models in Example \ref{ex:twoModeSys} for $\omega\in[0,1.7]$. The dashed black vertical line denotes $\omega=1.7$. We see in the figure that the proposed method and the Gawronski method finds models which are good approximations of the true model on the given frequency interval. \figComment}
  \label{fig:twoModeSysFull}
\end{figure}%
\begin{figure}
  \centering
  \includegraphics[width=\figwidth]{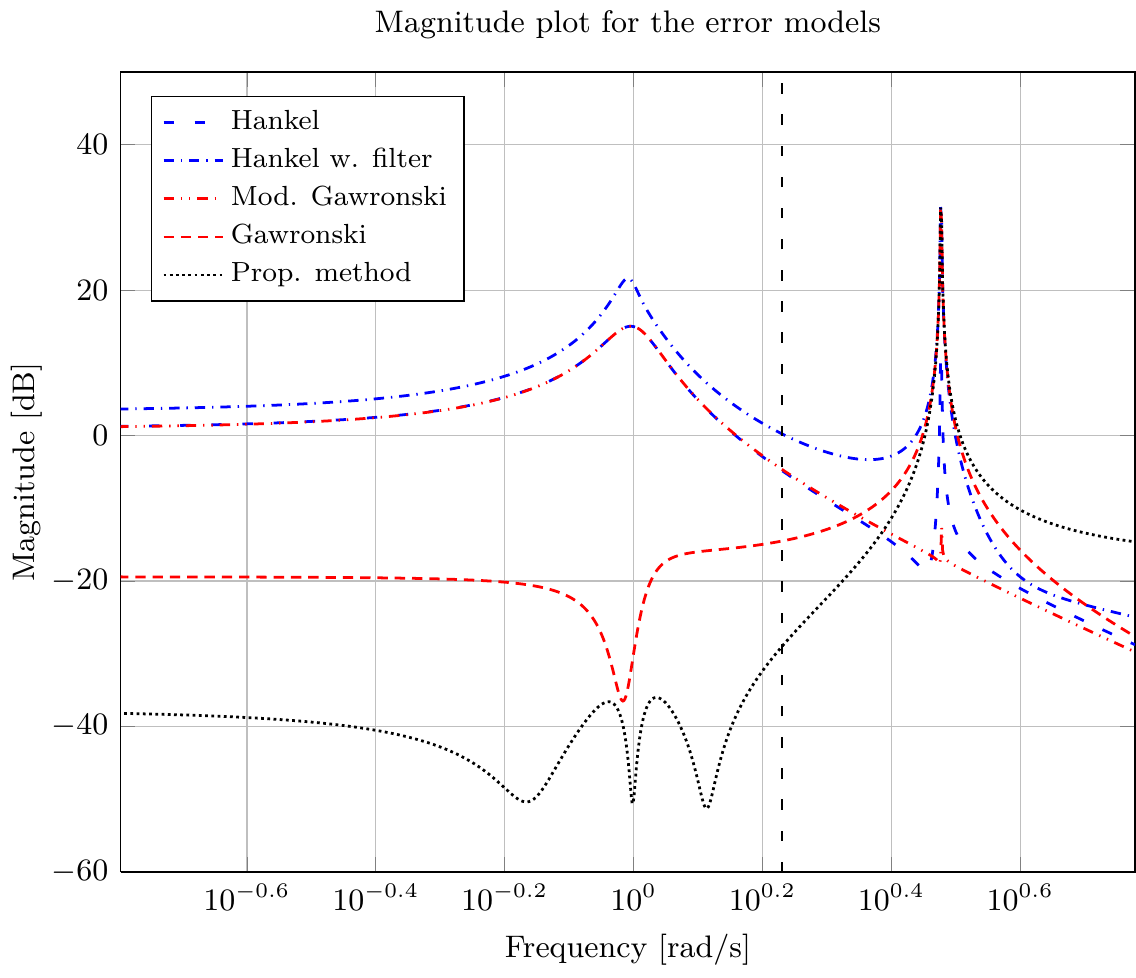}
  \caption{Magnitude plots of the error models in Example \ref{ex:twoModeSys} for $\omega\in[0,1.7]$. The dashed black vertical line denotes $\omega=1.7$. We see in the figure that the proposed method and the Gawronski method finds accurate approximations to the true model on the given frequency interval. \figComment}
  \label{fig:twoModeSysError}
\end{figure}
\begin{table}[!htb]
  \centering
  \caption{Numerical results for Example \ref{ex:twoModeSys}}
  \label{tab:twoModeSys}
  \begin{tabular}{r|c|c|c|c}
    & $\norm{G-\hat G}_{\Htwo,\omega}$ & $\frac{\norm{G-\hat G}_{\Htwo,\omega}}{\norm{G}_{\Htwo,\omega}}$ & $\frac{\norm{G-\hat G}_{\Hinf,\omega}}{\norm{G}_{\Hinf,\omega}}$ & $\real \lambda_\text{max}$ \\
    \hline
          Hankel &     1.77e+00 &     1.01e+00 &     1.00e+00 &  -1.59e-03 \\ 
Hankel w. filter &     2.93e+00 &     1.67e+00 &     2.14e+00 &  -4.03e-02 \\ 
Mod. Gawronski &     1.77e+00 &     1.01e+00 &     1.00e+00 &  -1.51e-03 \\ 
   Gawronski &     9.14e-02 &     5.21e-02 &     3.35e-02 &  -9.88e-02 \\ 
Prop. method &     8.51e-02 &     4.85e-02 &     3.26e-02 &  -9.94e-02 \\ 

  \end{tabular}
\end{table}
\end{example}

\begin{example}[Example 1 in \cite{GawronskiJ:1990}]\label{ex:gawronski}
  In this example we reuse Example 1 from \cite{GawronskiJ:1990}. The model, which is a spring-damper model with three masses, has six states and we will reduce the model to three states and limit the frequency interval to $\omega\in[1.5,3.2]$. To help the Hankel method we create a band pass Butterworth filter of order 10 with cut off frequencies of $1.5$ and $3.2$ rad/s, see Figure \ref{fig:gawronskiFilter}. The results from the different methods can be seen in Figures \ref{fig:gawronskiFull} and \ref{fig:gawronskiError} and Table \ref{tab:gawronski}. 
The proposed method and the Gawronski method are also in this example successful in finding low order models that approximate the given model on the given frequency range, and all the reduced order models are stable.
The proposed method and the Gawronski method captures the correct frequency interval with good accuracy. The Hankel method on the filtered model captures the correct resonance frequency, however, with the worst overall $\Htwo$-measure. The other two methods misses the relevant frequency interval. Only the proposed method, the Gawronski method and the Hankel method with the filtered model are able to find the right resonance peak, whilst the modified Gawronski and the Hankel method are completely off.
\begin{figure}
   \centering
  \includegraphics[width=\figwidth]{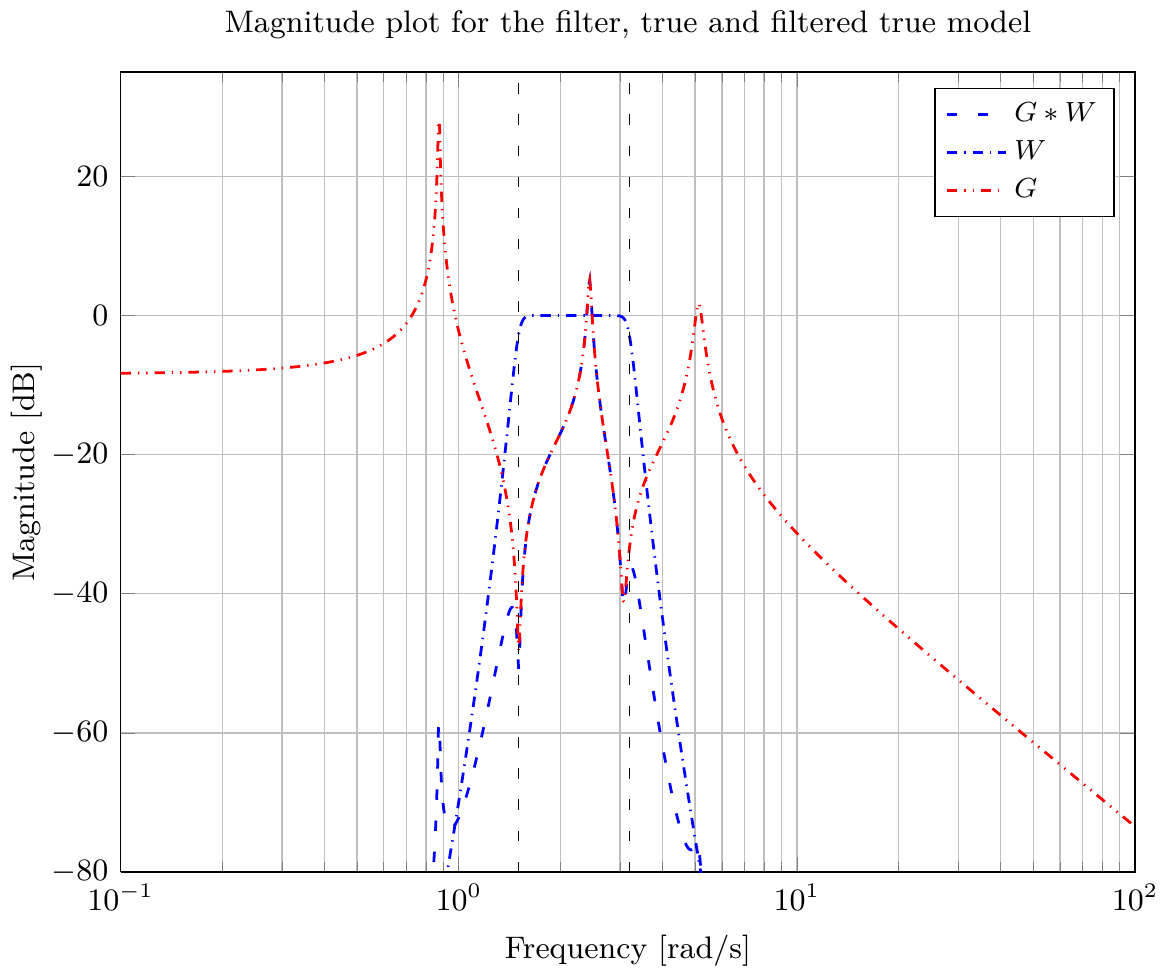}
  \caption{%
    Magnitude plot of the given model, the filtered model and the filter used in Example \ref{ex:gawronski}. \figComment}\label{fig:gawronskiFilter}
\end{figure}
\begin{figure}
  \centering
  \includegraphics[width=\figwidth]{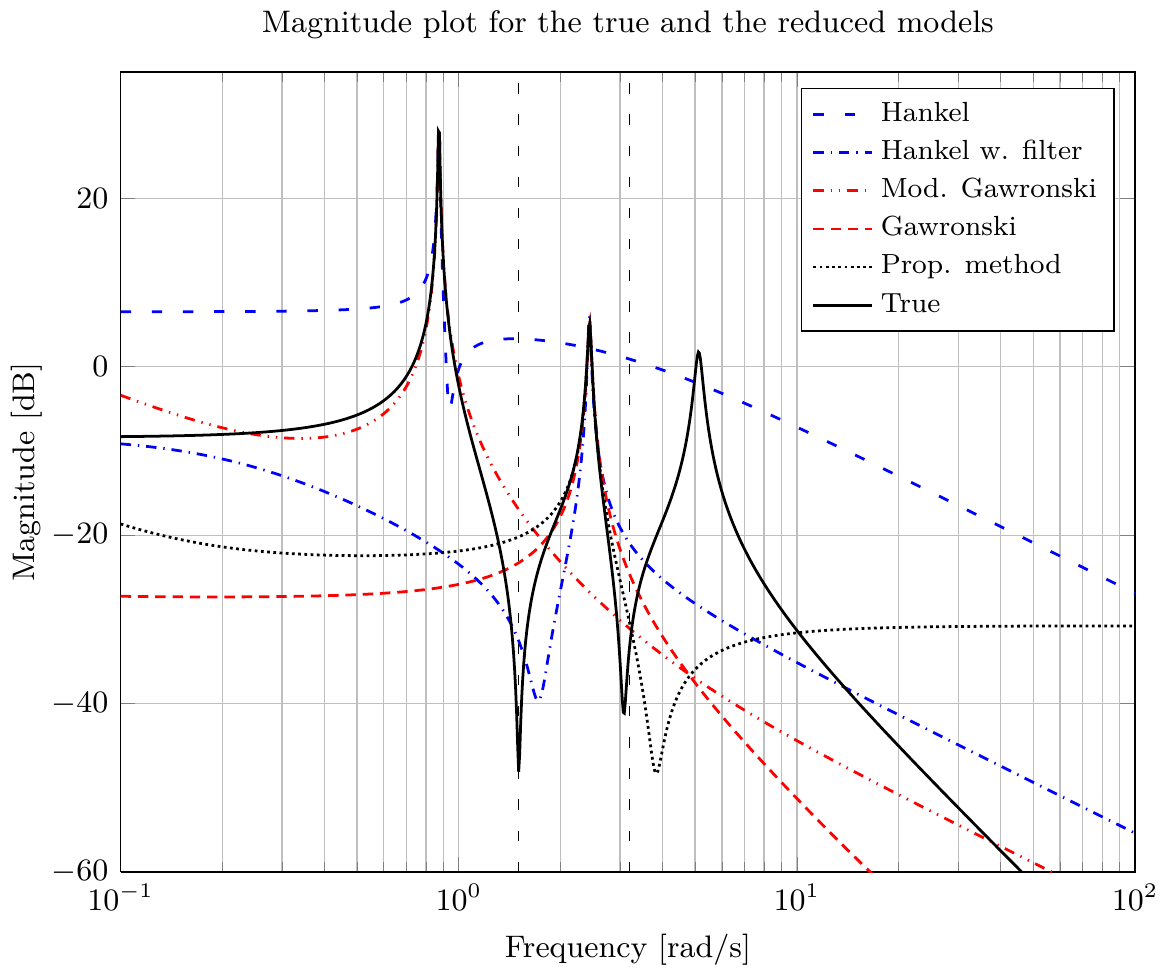}
  \caption{Magnitude plot of the given and reduced order models in Example \ref{ex:gawronski} for $\omega\in[1.5,3.2]$. The dashed black vertical lines denotes $\omega=1.5$ and $3.2$. We see that proposed method and the Gawronski method tries to capture the correct resonance frequency with good accuracy, also the Hankel method on the filtered model does this but not as good. The other methods misses the relevant frequency interval. \figComment}
  \label{fig:gawronskiFull}
\end{figure}%
\begin{figure}
  \centering
  \includegraphics[width=\figwidth]{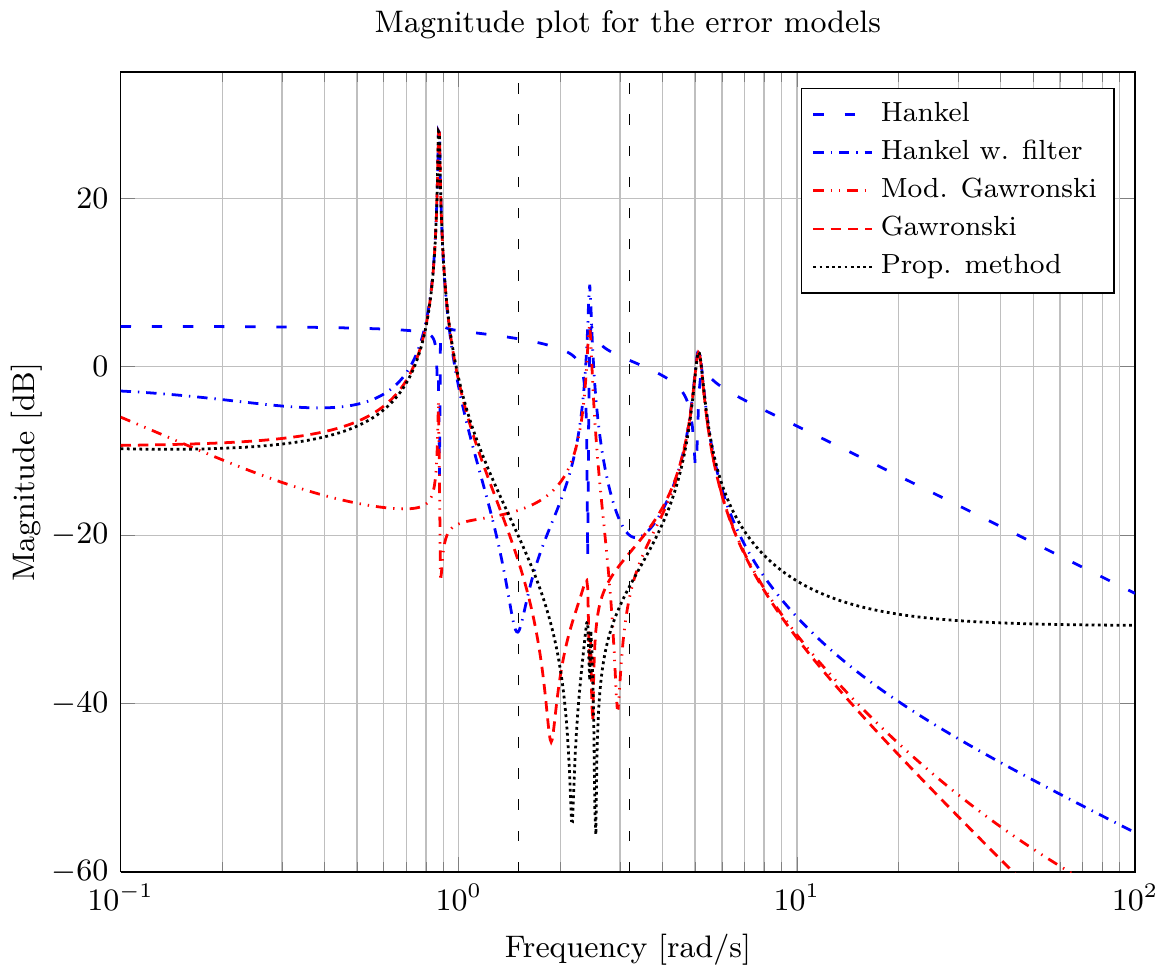}
  \caption{Magnitude plots of the error models in Example \ref{ex:gawronski} for $\omega\in[1.5,3.2]$. The dashed black vertical lines denotes $\omega=1.5$ and $3.2$. The proposed method and the Gawronski method finds the best approximations on the relevant frequency interval. \figComment}
  \label{fig:gawronskiError}
\end{figure}
  \begin{table}[!htb]
    \centering
    \caption{Numerical results for Example \ref{ex:gawronski}}
    \label{tab:gawronski}
    \begin{tabular}{r|c|c|c|c}
    & $\norm{G-\hat G}_{\Htwo,\omega}$ & $\frac{\norm{G-\hat G}_{\Htwo,\omega}}{\norm{G}_{\Htwo,\omega}}$ & $\frac{\norm{G-\hat G}_{\Hinf,\omega}}{\norm{G}_{\Hinf,\omega}}$ & $\real \lambda_\text{max}$ \\
    \hline
            Hankel &     9.27e-01 &     3.05e+00 &     9.59e-01 &  -3.54e-03 \\ 
Hankel w. filter &     4.58e-01 &     1.50e+00 &     1.71e+00 &  -2.02e-02 \\ 
Mod. Gawronski &     3.10e-01 &     1.02e+00 &     9.89e-01 &  -3.78e-03 \\ 
   Gawronski &     3.30e-02 &     1.09e-01 &     4.36e-02 &  -2.86e-02 \\ 
Prop. method &     2.72e-02 &     8.94e-02 &     5.34e-02 &  -2.86e-02 \\ 

    \end{tabular}
  \end{table}
\end{example}

\begin{example}[Aircraft example]\label{ex:cofcluo}
  The model in this example is a model with 22 states that describes the longitudinal motion of an aircraft, see \cite{VargaHP:2012}. We will reduce this model to 15 states and limit the frequency range to $\omega\in[0,15]$. To help the Hankel method we create a low pass Butterworth filter of order 10 with a cut off frequency of 15 rad/s, see Figure \ref{fig:cofcluoFilter}. The results from the different methods can be seen in Figures \ref{fig:cofcluoFull} and \ref{fig:cofcluoError} and Table \ref{tab:cofcluo}. In this example the Gawronski method results in a model which at a first glance looks very good, however the model is unstable. The other methods capture the given model with varying accuracy. Since the Gawronski method generates an unstable model in this example, the proposed method is initialized with the model from the Hankel reduction instead. The proposed method finds the most accurate model on the given frequency range.
\begin{figure}
   \centering
  \includegraphics[width=\figwidth]{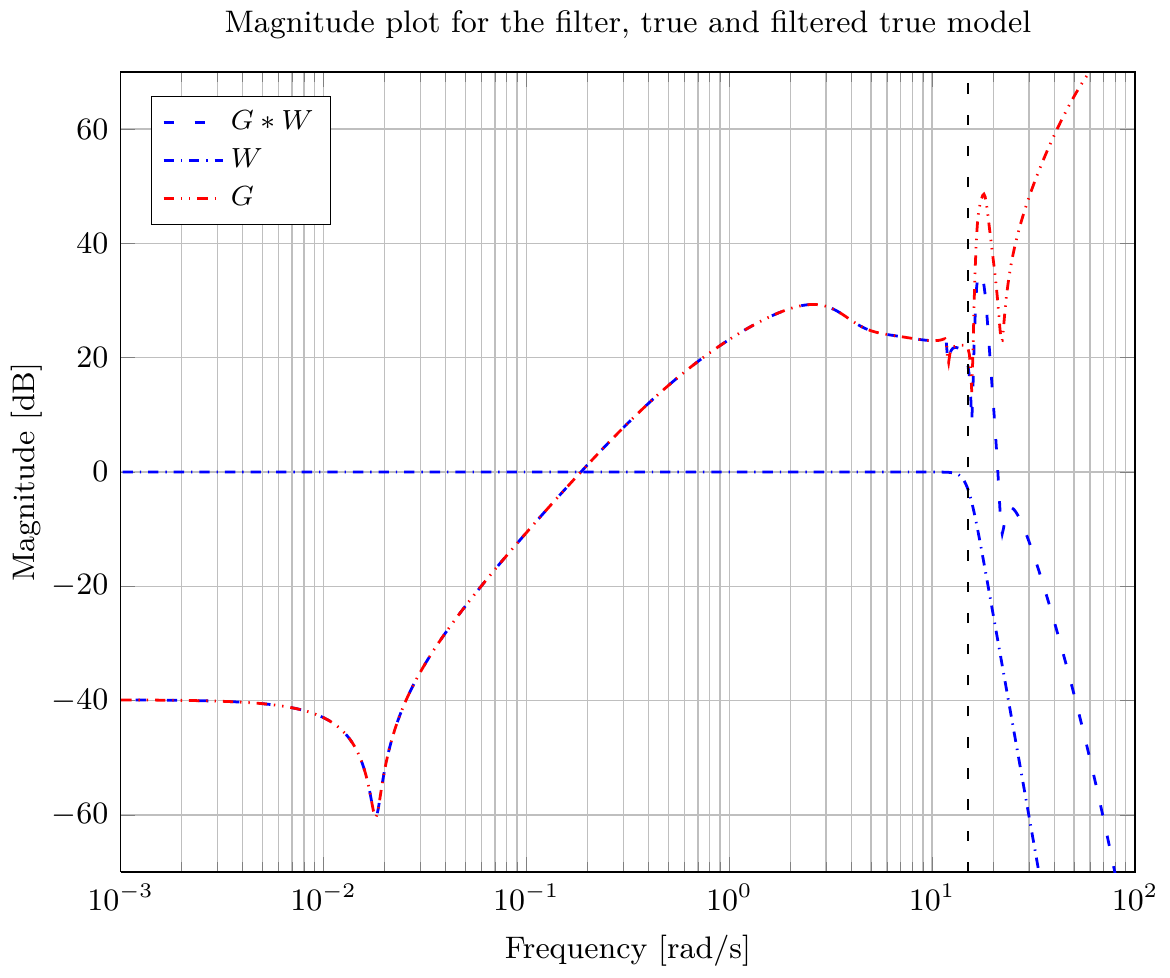}
  \caption{%
    Magnitude plot of the given model, the filtered model and the filter used in Example \ref{ex:cofcluo}. \figComment}\label{fig:cofcluoFilter}
\end{figure}
\begin{figure}
  \centering
  \includegraphics[width=\figwidth]{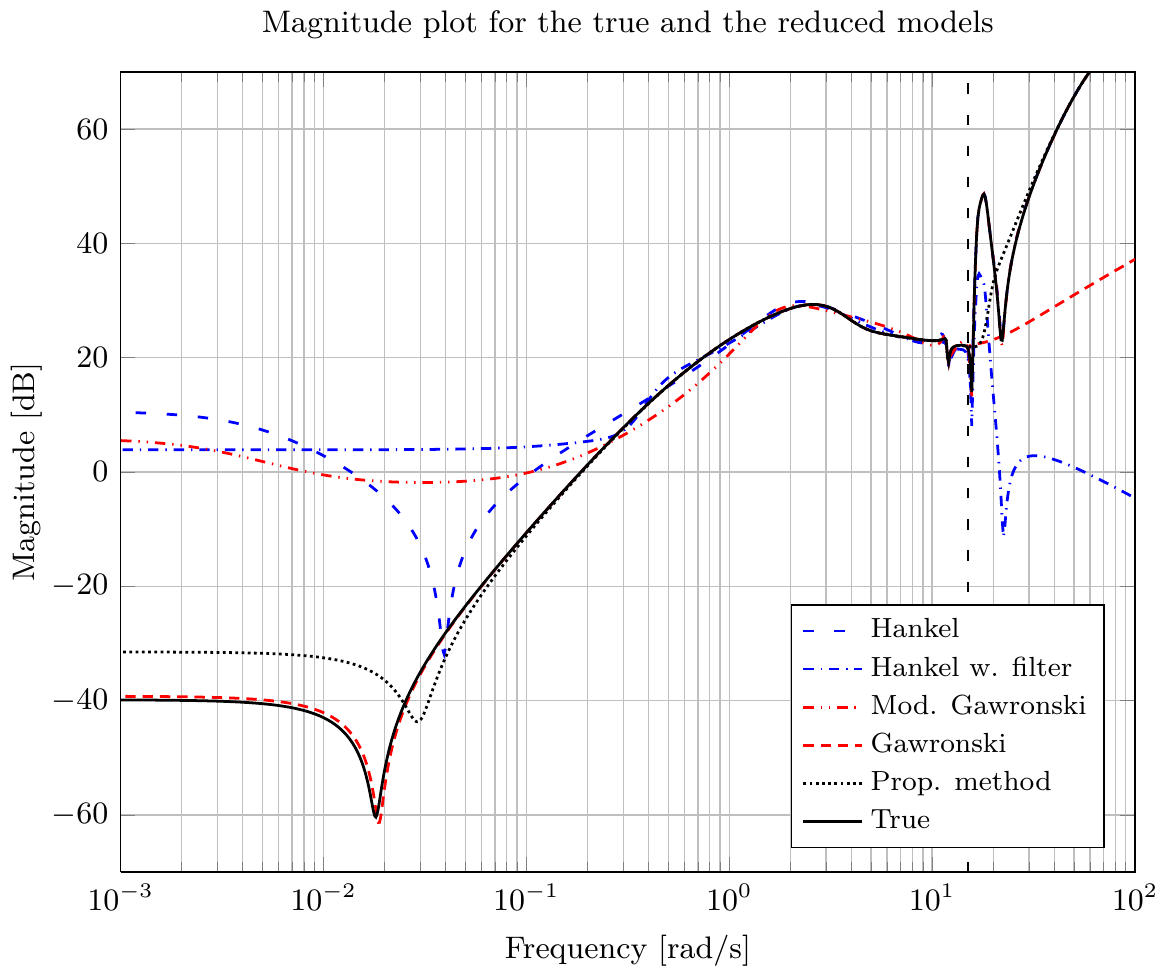}
  \caption{Magnitude plot of the given and reduced order models in Example \ref{ex:cofcluo} for $\omega\in[0,15]$. The dashed black vertical line denotes $\omega=15$. The Gawronski method looks to have found the best approximation, however the model found by the Gawronski method is unstable. The proposed method finds the best stable approximation of the true model. \figComment}
  \label{fig:cofcluoFull}
\end{figure}%
\begin{figure}
  \centering
  \includegraphics[width=\figwidth]{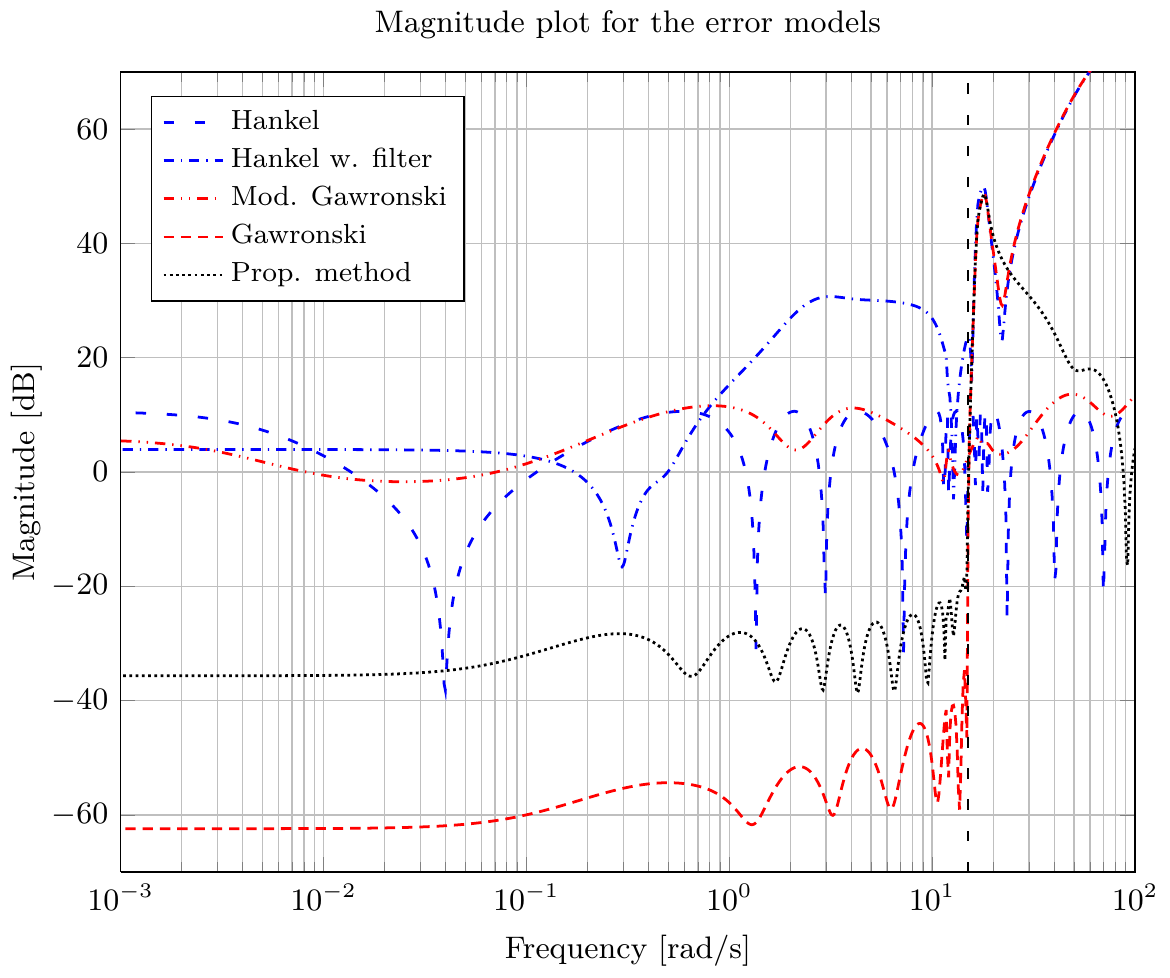}
  \caption{Magnitude plots of the error models in Example \ref{ex:cofcluo} for $\omega\in[0,15]$. The dashed black vertical line denotes $\omega=15$. The Gawronski method looks to have found the approximation with the smallest error, however the model found by the Gawronski method is unstable. The proposed method finds the model with the lowest error which is stable. \figComment}
  \label{fig:cofcluoError}
\end{figure}
  \begin{table}[!htb]
    \centering
    \caption{Numerical results for Example \ref{ex:cofcluo}}
    \label{tab:cofcluo}
    \begin{tabular}{r|c|c|c|c}
    & $\norm{G-\hat G}_{\Htwo,\omega}$ & $\frac{\norm{G-\hat G}_{\Htwo,\omega}}{\norm{G}_{\Htwo,\omega}}$ & $\frac{\norm{G-\hat G}_{\Hinf,\omega}}{\norm{G}_{\Hinf,\omega}}$ & $\real \lambda_\text{max}$ \\
    \hline
            Hankel &     5.15e+00 &     1.39e-01 &     1.16e-01 &  -4.85e-03 \\ 
Hankel w. filter &     5.10e+01 &     1.37e+00 &     1.17e+00 &  -1.24e-01 \\ 
Mod. Gawronski &     5.00e+00 &     1.35e-01 &     1.30e-01 &  -3.19e-03 \\ 
   Gawronski &          -- &          -- &     -- &   4.67e+00 \\ 
Prop. method &     1.68e-01 &     4.51e-03 &     1.51e-02 &  -1.77e-01 \\ 

    \end{tabular}
  \end{table}
\end{example}
In all three of the above examples we observe that the proposed method finds an at least as good model as the method proposed in \cite{GawronskiJ:1990}, but with the proposed method we can guarantee that the reduced order model is stable and also impose structure in the system matrices. In the first two examples it takes less than one second and in the third example about 20 seconds for the proposed method to find a reduced order model.

\section{Conclusions}

In this paper we have proposed a method, based on nonlinear optimization, that uses the frequency-limited Gramians introduced in \cite{GawronskiJ:1990}, and the method does not have the drawback of finding unstable models. We use these Gramians to construct a frequency-limited $\Htwo$-norm which describes the cost function of the optimization problem. We derive a gradient of the proposed cost function which enables us to use off-the-shelve optimization software to solve the problem efficiently. The derivation of the method also enables us to impose structural constraints, \eg, upper triangular $\m A$-matrix, in the system matrices. The derivation follows closely the technique in \cite{Petersson:2010} and it is easy to extend the method to identify \lpv-models. We also presented three examples of different sizes and characteristics to show the applicability of the method. The method uses a nonlinear optimization approach on a nonconvex problem, hence the unavoidable problem of local minima are present. However, we consider this as a two step approach where you use your favorite reduction method as an initial point to this method which will refine the solution, as we have shown in the examples.


\bibliography{IEEEabrv,modRedArticleSCL}

\begin{thebibliography}{10}

\bibitem{Enns:1984}
Dale~F. Enns.
\newblock Model reduction with balanced realizations: An error bound and a
  frequency weighted generalization.
\newblock In {\em Proceedings of the 23rd IEEE Conference on Decision and
  Control}, pages 127 -- 132, Las Vegas, USA, 1984.

\bibitem{GawronskiJ:1990}
Wodek Gawronski and Jer-Nan Juang.
\newblock Model reduction in limited time and frequency intervals.
\newblock {\em International Journal of Systems Science}, 21(2):349--376, 1990.

\bibitem{GhafoorS:2008}
Abdul Ghafoor and Victor Sreeram.
\newblock A survey/review of frequency-weighted balanced model reduction
  techniques.
\newblock {\em Journal of Dynamic Systems, Measurement and Control},
  130:061004, 2008.

\bibitem{GugercinA:2004}
Serkan Gugercin and Athanasios~C. Antoulas.
\newblock A survey of model reduction by balanced truncation and some new
  results.
\newblock {\em International Journal of Control}, 77(8):748--766, 2004.

\bibitem{Higham:2008}
Nicholas~J. Higham.
\newblock {\em Functions of Matrices: Theory and Computation}.
\newblock SIAM, 2008.

\bibitem{LinC:1992}
Ching-An Lin and Tai-Yih Chiu.
\newblock Model reduction via frequency weighted balanced realization.
\newblock {\em Control Theory and Advanced Technology}, 8:341--351, 1992.

\bibitem{Petersson:2010}
Daniel Petersson.
\newblock {\em Nonlinear optimization approaches to $\mathcal{H}_2$-norm based
  {LPV} modelling and control}.
\newblock Licentiate thesis no. 1453, Department of Electrical Engineering,
  Link{\"o}ping University, 2010.

\bibitem{SahlanGS:2012}
Shafishuhaza Sahlan, Abdul Ghafoor, and Victor Sreeram.
\newblock A new method for the model reduction technique via a limited
  frequency interval impulse response gramian.
\newblock {\em Mathematical and Computer Modelling}, 55(3-4):1034--1040, 2012.

\bibitem{VargaHP:2012}
Andreas Varga, Anders Hansson, and Guilhem Puyou, editors.
\newblock {\em Optimization Based Clearance of Flight Control Laws}.
\newblock Lecture Notes in Control and Information Science. Springer, 2012.

\bibitem{WangSL:1999}
G.~Wang, Victor Sreeram, and W.~Q. Liu.
\newblock A new frequency-weighted balanced truncation method and an error
  bound.
\newblock {\em {IEEE} Trans. Autom. Control}, 44(9):1734 -- 1737, 1999.

\bibitem{ZhouDG:1996}
Kemin Zhou, John~C. Doyle, and Keith Glover.
\newblock {\em Robust and optimal control}.
\newblock Prentice-Hall, Inc., 1996.

\end{thebibliography}
\bibliographystyle{plain}

\appendix

\section{Proof of Lemma \ref{lem:realPart}}
\label{sec:proof-lemma-}

We want to show that
\begin{equation}
  \label{eq:17}
  \m S_{\omega} = \frac{\ii}{2\pi}\ln\parens{(\m A + \ii\omega\ident)(\m A - \ii\omega\ident)^\inv} = 
    \real\brackets{\frac{\ii}{\pi}\ln\parens{-\m A-\ii\omega\ident}}
\end{equation}
assuming that $\m A$ is Hurwitz, \ie, $\m A$ is a real matrix and if $\lambda$ is an eigenvalue to $\m A$ then we have that $\real\parens{\lambda} < 0$. We can decompose $\m A$ as $\m A = \m V\m D\m V^\inv$ where $\m D$ is a diagonal matrix with the eigenvalues to $\m A$ on the diagonal and $\m V$ is the matrix with the eigenvectors to $\m A$ as its columns. For a real matrix we have that the eigenvectors for real eigenvalues are always real and that the eigenvectors for a complex-conjugate pair are also complex-conjugate.
We can write $\m A$ as
\begin{align}
  \label{eq:131}
  \m A = & \m V\m D\m V^\inv =
  \begin{bmatrix}
    v_1 v_2 \cdots v_n
  \end{bmatrix}
  \begin{bmatrix}
    \lambda_1 & 0 & \cdots &  & 0 \\
    0 & \lambda_2 & & &  \\
    \vdots &  & \ddots &  & \vdots \\
    0 & 0 & \cdots & & \lambda_n
  \end{bmatrix}
  \begin{bmatrix}
    \tilde v_1^\+ \\ \tilde v_2^\+ \\ \vdots \\ \tilde v_n^\+
  \end{bmatrix} \nonumber \\
  = & v_1\tilde v_1 \lambda_1 + v_2\tilde v_2\lambda_2 + \dots + v_n\tilde v_n\lambda_n
\end{align}
and $f(\m A)$ as
\begin{align}
  \label{eq:13}
  f(\m A) = & \m Vf(\m D)\m V^\inv =
  \begin{bmatrix}
    v_1 v_2 \cdots v_n
  \end{bmatrix}
  \begin{bmatrix}
    f(\lambda_1) & 0 & \cdots &  & 0 \\
    0 & f(\lambda_2) & & &  \\
    \vdots &  & \ddots &  & \vdots \\
    0 & 0 & \cdots & & f(\lambda_n)
  \end{bmatrix}
  \begin{bmatrix}
    \tilde v_1^\+ \\ \tilde v_2^\+ \\ \vdots \\ \tilde v_n^\+
  \end{bmatrix} \nonumber \\
  = & v_1\tilde v_1 f(\lambda_1) + v_2\tilde v_2f(\lambda_2) + \dots + v_n\tilde v_nf(\lambda_n).
\end{align}
For a real eigenvalue, $\lambda$, in $\m A$ with corresponding eigenvector, $v$, represented as one term in equation \eqref{eq:13} and
\begin{equation}
  \label{eq:19}
  f(\lambda) = \ii\ln\parens{\frac{\lambda+\ii\omega}{\lambda-\ii\omega}} = \ii\ln\parens{-\lambda-\ii\omega} - \ii\ln\parens{-\lambda+\ii\omega}
\end{equation}
we can write this term as
\begin{multline}
  \label{eq:1244}
  v \tilde v^\+ \parens{\ii\ln\parens{-\lambda-\ii\omega} -  \ii\ln\parens{-\lambda+\ii\omega}} = v \tilde  v^\+ \parens{\ii\ln\parens{-\lambda-\ii\omega} + \overline{ \ii\ln\parens{-\lambda-\ii\omega}}} \\
  = 2v\tilde v^\+\real\brackets{\ii\ln\parens{-\lambda-\ii\omega}} = 2\real\brackets{v\tilde v^\+\ii\ln\parens{-\lambda-\ii\omega}} = 2\real\brackets{v\tilde v^\+g(\lambda)}.
\end{multline}
If we now instead look at two terms in \eqref{eq:13} corresponding to a complex-conjugate pair of eigenvalues, $\lambda$ and $\overline\lambda$, with corresponding eigenvectors, $v$ and $\overline v$, then we can write this as
\begin{multline}
  \label{eq:1455}
  v\tilde v^\+f(\lambda) + \overline v \overline{\tilde v^\+} f(\overline\lambda) = v\tilde v^\+ \ii\ln\parens{-\lambda-\ii\omega} - v\tilde v^\+ \ii\ln\parens{-\lambda+\ii\omega} \\
  + \overline v \overline{\tilde v^\+}\ii\ln\parens{-\overline\lambda-\ii\omega} - \overline v\overline{\tilde v^\+} \ii\ln\parens{-\overline\lambda+\ii\omega} = v \tilde v^\+g(\lambda) + \overline{\overline v\overline{\tilde v^\+} g(\overline\lambda)} + \\
  + \overline v \overline{\tilde v^\+}g(\overline \lambda) + \overline{v\tilde v^\+ g(\lambda)} = 2\real\brackets{v\tilde v^\+ g(\lambda)} + 2\real\brackets{\overline v\overline{\tilde v^\+}g(\overline\lambda)}.
\end{multline}
This means that for a matrix, $\m A$, which is Hurwitz, we have
\begin{multline}
  \label{eq:1555}
  f(\m A) = \m V f(\m D)\m V^\inv =  v_1\tilde v_1 f(\lambda_1) + v_2\tilde v_2f(\lambda_2) + \dots + v_n\tilde v_nf(\lambda_n) \\
  = 2\real\brackets{v_1\tilde v_1 g(\lambda_1) + v_2\tilde v_2g(\lambda_2) + \dots + v_n\tilde v_ng(\lambda_n)} = 2\real g(\m A),
\end{multline}
\ie, we have
\begin{equation}
  \label{eq:18}
  \m S_{\omega} = \frac{\ii}{2\pi}\ln\parens{(\m A + \ii\omega\ident)(\m A - \ii\omega\ident)^\inv} = 
    \real\brackets{\frac{\ii}{\pi}\ln\parens{-\m A-\ii\omega\ident}}.
\end{equation}

\section{Derivation of the Gradient w.r.t. \texorpdfstring{$\mh A$}{Ahat}}
\label{sec:derGradA}

In this appendix we present the differentiation of the cost function with respect to $\mh A$.

The cost function is
\begin{align}
  \norm{E}^2_{\Htwo,\omega} & = \trace \left( \m B^\+\m Q_\omega\m B + 2\m B^\+\m Y_\omega\mh B + \mh B^\+\mh Q_\omega\mh B\right) \nonumber \\
    & + 2\trace\brackets{\m C\m S_\omega\m B + \m D\frac{\omega}{2\pi} -  \parens{\mh C\mh S_\omega\mh B + \mh D\frac{\omega}{2\pi}}}\parens{\m D^\+ - \mh D^\+}\label{eq:costfcnB2}
\end{align}
and by looking at the equations
\begin{subequations}\label{eq:14}
\begin{align}
  \label{eq:1312}
  \m A^\+\m Y_\omega + \m Y_\omega\mh A - \m S^\herm_\omega\m C^\+\mh C - \m C^\+\mh C\mh S_\omega & = \m 0, \\
  \mh A^\+\mh Q_\omega + \mh Q_\omega\mh A + \mh S^\herm_\omega\mh C^\+\mh C + \mh C^\+\mh C\mh S_\omega & = \m 0,
\end{align}
\end{subequations}
we observe that $\mh Q_\omega$ and $\m Y_\omega$ depend on $\mh A$ which we need to keep in mind when differentiating \eqref{eq:costfcnB2} with respect to $\mh A$. Hence, $\left[\frac{\partial\norm{E}^2_{\Htwo,\omega}}{\partial\mh A}\right]_{ij}$ becomes
\begin{equation}\label{eq:gradtemp2}
  \left[\frac{\partial\norm{E}^2_{\Htwo,\omega}}{\partial\mh A}\right]_{ij} = \trace
  \left(2\mh B\m B^\+\frac{\partial\m Y_\omega}{\partial\hat a_{ij}} + \mh B\mh
    B^\+\frac{\partial\mh Q_\omega}{\partial\hat a_{ij}} - 2 \mh C\frac{\partial\mh S_\omega}{\partial\hat a_{ij}}\mh B \parens{\m D^\+ - \mh D^\+}\right),
\end{equation}
where $\frac{\partial\m Y_\omega}{\partial\hat a_{ij}}$ and $\frac{\partial\mh Q_\omega}{\partial\hat a_{ij}}$ depend on $\mh A$ via the differentiated versions of the equations in \eqref{eq:14},
\begin{subequations}\label{eq:lyapperturbed2}
\begin{equation}
  \mh A^\+\frac{\partial\m Y^\+_\omega}{\partial\hat a_{ij}} + \frac{\partial\m
    Y^\+_\omega}{\partial\hat a_{ij}}\m A + \frac{\partial\mh A^\+}{\partial\hat a_{ij}}\m Y^\+_\omega - \frac{\partial \mh S_\omega^\herm}{\partial \hat a_{ij}}\mh C^\+\m C = \m 0 , \label{eq:lyapperturbed21}
\end{equation}
\begin{equation}
\mh A^\+\frac{\partial\mh Q_\omega}{\partial\hat a_{ij}} + \frac{\partial\mh Q_\omega}{\partial\hat a_{ij}}\mh A + \frac{\partial\mh A^\+}{\partial\hat a_{ij}}\mh Q_\omega
  + \mh Q_\omega\frac{\partial\mh A}{\partial\hat a_{ij}} \\
  + \frac{\partial \mh S_\omega^\herm}{\partial \hat a_{ij}}\mh C^\+\mh C + \mh C^\+\mh C \frac{\partial \mh S_\omega}{\partial \hat a_{ij}}= \m 0 .\label{eq:lyapperturbed22}
\end{equation}
\end{subequations}
To be able to substitute $\frac{\partial\mh Q_\omega}{\partial\hat a_{ij}}$ and $\frac{\partial\m Y_\omega}{\partial\hat a_{ij}}$ to something more computationally tractable we use the following lemma.
\begin{lemma}\label{lem:sylvester2}
  If $\m M$ and $\m N$ satisfy the Sylvester equations
  \begin{equation*}
    \m A\m M + \m M\m B + \m C = \m 0, \quad \m N\m A + \m B\m N + \m D = \m 0,
  \end{equation*}
  then $\trace \m C\m N = \trace \m D\m M$.
\end{lemma}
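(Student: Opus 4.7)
The plan is to reduce the identity to the cyclic invariance of the trace. Both Sylvester equations are explicit in $\m C$ and $\m D$ respectively, so I would first solve them to get
\begin{equation*}
  \m C = -\m A\m M - \m M\m B, \qquad \m D = -\m N\m A - \m B\m N,
\end{equation*}
and then substitute these expressions into the two traces under comparison.

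Concretely, substitution followed by linearity of the trace gives
\begin{equation*}
  \trace(\m C\m N) = -\trace(\m A\m M\m N) - \trace(\m M\m B\m N),
\end{equation*}
and analogously
\begin{equation*}
  \trace(\m D\m M) = -\trace(\m N\m A\m M) - \trace(\m B\m N\m M).
\end{equation*}
Applying the cyclic property of the trace to each summand pairs them up: $\trace(\m A\m M\m N) = \trace(\m N\m A\m M)$ and $\trace(\m M\m B\m N) = \trace(\m B\m N\m M)$. Hence the two right-hand sides are identical and the claimed equality follows.

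There is essentially no obstacle beyond bookkeeping; the only thing to keep an eye on is that the block dimensions of $\m A$, $\m B$, $\m C$, $\m D$, $\m M$, $\m N$ are consistent with both Sylvester equations and with the products $\m C\m N$ and $\m D\m M$ being square, but this is forced by the hypotheses. In the intended application to \eqref{eq:lyapperturbed2}, one chooses $\m M$ as $\partial\mh Q_\omega/\partial\hat a_{ij}$ or $\partial\m Y_\omega^\+/\partial\hat a_{ij}$ and $\m N$ as the solution of the auxiliary Sylvester/Lyapunov equations whose forcing terms come from the factors $\mh B\mh B^\+$ and $2\mh B\m B^\+$ appearing in \eqref{eq:gradtemp2}, so that the dependence on $\partial\mh Q_\omega/\partial\hat a_{ij}$ and $\partial\m Y_\omega/\partial\hat a_{ij}$ is eliminated and replaced by the comparatively cheap $\partial\mh A/\partial\hat a_{ij}$ and $\partial\mh S_\omega/\partial\hat a_{ij}$ terms.
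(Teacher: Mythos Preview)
Your argument is correct: expressing $\m C$ and $\m D$ from the two Sylvester equations and then invoking the cyclic invariance of the trace is exactly the right way to establish the identity, and your bookkeeping of the dimensions is sound. The paper itself states this lemma without proof, so there is nothing to compare against; your short computation supplies precisely what the paper omits.
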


Studying Lemma \ref{lem:sylvester2}, the two factors in front of $\frac{\partial\m Y_\omega}{\partial\hat a_{ij}}$ and $\frac{\partial\mh Q_\omega}{\partial\hat a_{ij}}$ in \eqref{eq:gradtemp2} and the structure of the Lyapunov/Sylvester equations in \eqref{eq:lyapperturbed2}, $\mh A^\+ \cdot + \cdot \mh A + \star = 0$ and $\mh A^\+ \cdot + \cdot \m A + \star = 0$, brings us to the conclusion that, to do the substitution, we need to solve two additional Lyapunov/Sylvester equations, namely
\begin{subequations}\label{eq:lyapOrig2}
\begin{align}
  \m A\m X + \m X\mh A^T + \m B\mh B^T & = \m 0, \label{eq:lyapsylOrig21}\\
  \mh A\mh P + \mh P\mh A^T + \mh B\mh B^T & = \m 0. \label{eq:lyapsylOrig22}
\end{align}
\end{subequations}
Note that $\mh P$ in \eqref{eq:lyapsylOrig22} is the controllability Gramian for the reduced order model.

Rewriting \eqref{eq:gradtemp2} using Lemma \ref{lem:sylvester2}, \eqref{eq:lyapperturbed2} and \eqref{eq:lyapOrig2} leads to
\begin{align}
  \label{eq:82}
  \left[\frac{\partial\norm{E}^2_{\Htwo,\omega}}{\partial\mh A}\right]_{ij} = & 2\trace\brackets{{\frac{\partial\mh A^\+}{\partial\hat a_{ij}} \parens{\m Y_\omega^\+\m X + \mh Q_\omega\mh P}} 
    + {\frac{\partial\mh S^\herm_\omega}{\partial\hat a_{ij}}\parens{\mh C^\+\mh C\mh P - \mh C^\+\m C\m X}}} \nonumber \\
   & - 2\trace\brackets{\frac{\partial\mh S_\omega}{\partial\hat a_{ij}}\brackets{\mh B\parens{\m D^\+ - \mh D^\+}\mh C}}.
\end{align}
What remains is to rewrite the two last terms in \eqref{eq:82}, which includes $\frac{\partial\mh S_\omega}{\partial\hat a_{ij}}$ and $\frac{\partial\mh S^\herm_\omega}{\partial\hat a_{ij}}$.
Recall the definition of $\mh S_\omega$,
\begin{equation}
  \label{eq:6}
   \mh S_\omega = \real\brackets{\frac{\ii}{\pi}\ln\parens{-\mh A - \ii\omega\ident}} = \real\brackets{\frac{\ii}{\pi}\ln\parens{r(\mh A)}}
\end{equation}
and differentiate with respect to an element in $\mh A$, \ie,  $a_{ij}$. This yields
\begin{equation}
  \label{eq:72}
  \frac{\partial \mh S_\omega}{\partial a_{ij}} = \real\brackets{\frac{\ii}{2\pi}L\parens{r(\mh A),\frac{\partial r(\mh A)}{\partial a_{ij}}}}
  = \real\brackets{\frac{\ii}{2\pi}L\parens{r(\mh A),-\frac{\partial \mh A}{\partial a_{ij}}}}
\end{equation}
where $L(\m A,\m E)$ is the Frech\'et derivative of the matrix logarithm, see \cite{Higham:2008}, with
\begin{align}
  \label{eq:62}
  L(\m A,\m E) = & \int_0^1 \parens{t(\m A-\ident)+\ident}^\inv\m E\parens{t(\m A-\ident)+\ident}^\inv \der t, \\
  r(\mh A) = & -\mh A - \ii\omega\ident.
\end{align}
The function $L(\m A,\m E)$ can be efficiently evaluated using the algorithm by \cite{Higham:2008}.

By substituting \eqref{eq:72} into \eqref{eq:82} and using \eqref{eq:62} with the fact that we can interchange the $\trace$-operator and the integral we obtain
\begin{multline}
  \label{eq:93}
  \left[\frac{\partial\norm{E}^2_{\Htwo,\omega}}{\partial\mh A}\right]_{ij} =  2\trace\brackets{{\frac{\partial\mh A^\+}{\partial\hat a_{ij}} \parens{\m Y_\omega^\+\m X + \mh Q_\omega\mh P}}
  +  {\frac{\partial\mh S^\herm_\omega}{\partial\hat
      a_{ij}}\parens{\mh C^\+\mh C\mh P - \mh C^\+\m C\m X}}} \\
   - 2\trace\brackets{\frac{\partial\mh S_\omega}{\partial\hat a_{ij}}\brackets{\mh B\parens{\m D^\+ - \mh D^\+}\mh C}} \\
   = 2\trace\parens{\frac{\partial\mh A^\+}{\partial\hat a_{ij}} \parens{\m Y_\omega^\+\m X + \mh Q_\omega\mh P}}
   - 2\trace\bigg(\frac{\partial\mh A^\+}{\partial\hat a_{ij}} \real\left[\frac{\ii}{\pi}L\parens{r(\mh A),\m V}\right]^\+\bigg)  \\
   =  \trace\parens{\frac{\partial\mh A^\+}{\partial\hat a_{ij}} \brackets{2\parens{\m Y_\omega^\+\m X + \mh Q_\omega\mh P} -2\m W }},
\end{multline}
where
\begin{align}
  \label{eq:111}
  \m W = & \real\left[\frac{\ii}{\pi} L\parens{r(\mh A),\m V}\right]^\+, \\
  \m V = & \mh C^\+\mh C\mh P - \mh C^\+\m C\m X-\mh C^\+\parens{\m
      D - \mh D}\mh B^\+ .
\end{align}

\end{document}